\documentclass{ifacconf}
\usepackage{amssymb}
\usepackage{graphicx}      
\usepackage{natbib}        
\usepackage{amsmath}       
\usepackage{bm}
\usepackage{subcaption}

\newtheorem{remark}{Remark}

\newcommand{\reals}{\mathbb{R}}

\newcommand{\X}{\mathcal{X}}
\newcommand{\F}{\mathcal{F}}

\newcommand{\R}{\mathcal{R}}
\newcommand{\obs}{\mathcal{O}}

\newcommand{\D}{\partial}
\begin{document}
\setlength{\abovedisplayskip}{2pt}   
\setlength{\belowdisplayskip}{2pt}
\setlength{\abovedisplayshortskip}{2pt}
\setlength{\belowdisplayshortskip}{2pt}
\setlength{\parskip}{0.4em} 
\begin{frontmatter}

\title{Differentiable Optimization Layered Safety-Critical Control for Risk-Aware Navigation via Conformal Prediction} 

\thanks[footnoteinfo]{Jinyang Dong and Shizhen Wu are co-first authors. }

\thanks[footnoteinfo]{This work is supported in part by the National Natural Science Foundation of China under Grant U25A20473 and Grant 62233011; and in part by the Key Technologies R \& D Program of Tianjin under Grant 23YFZCSN00060.
E-mails: \{dongjinyang; szwu\}@mail.nankai.edu.cn, fangyc@nankai.edu.cn  \emph{(Corresponding author: Yongchun Fang)}.
 }

\author[First,Second]{Jinyang Dong}, 
\author[First,Third]{Shizhen Wu}, 
\author[First,Third]{Yongchun Fang}

\address[First]{Institute of Robotics and Automatic Information Systems, College of Artificial Intelligence, Nankai University, Tianjin 300071, China}
\address[Second]{Academy for Advanced Interdisciplinary Studies, Nankai University, Tianjin 300071, China}
\address[Third]{Institute of Intelligence Technology and Robotic Systems, Shenzhen Research Institute of Nankai University, Shenzhen 518083, China}

\begin{abstract}                
Risk-aware navigation in unknown environments is a fundamental challenge for autonomous vehicles operating in complex urban systems. To address this issue, this paper presents a differentiable optimization layered safety-critical control method based on conformal prediction.  
First, to handle uncertainties arising from sensor noise, the conformal prediction method is employed to generate risk-aware obstacle ellipsoids around an elliptical-shaped robot. 
Second, two nested differentiable optimization layers are introduced to build the control barrier functions for obstacle avoidance and feasibility guarantee, respectively. Then, a quadratic program based safety-critical control law is proposed to integrate the above control barrier function constraints as well as input constraints.
In the end, the effectiveness of the proposed framework is demonstrated through numerical simulations.
\end{abstract}

\begin{keyword}
conformal prediction, uncertainty quantification, control barrier functions, differentiable optimization.
\end{keyword}

\end{frontmatter}

\section{Introduction}
Ensuring safety is a critical challenge in robotics and control systems, particularly for autonomous robots operating in urban environments. The control barrier function-quadratic program (CBF-QP) based safety-critical control framework (\cite{8796030}) has emerged as a powerful real-time tool to address these challenges.
Thanks to its low computational complexity and formal safety guarantees, CBF methods have seen widespread application in urban robotics and related fields (\cite{11180029,11124847, 11060906, shizhen2025Underactuated, lyu2024small}).

Recently, risk-aware navigation under measurement noise in unknown environments has received great attention.
Most existing safety controllers assume perfect knowledge of the environment, which is often unrealistic in practical scenarios where robots rely on noisy and uncertain sensor measurements to perceive their surroundings (\cite{long2025sensor}). 
These uncertainties can lead to inaccuracies in obstacle detection and localization. For the purpose of uncertainty quantification, conformal prediction (CP) (\cite{shafer2008tutorial}) has been introduced for risk-aware navigation in several aspects (\cite{yang2023safe, kwon2025conformalized, mei2024perceive}), where CP could provide valid prediction regions with a user-specified confidence level. More specifically,  CP is leveraged to dynamically calibrate prediction uncertainty and plan probabilistically safe trajectories for a legged robot in \cite{mei2024perceive}.
Similarly, a confidence bound is computed for the reachable set of manipulators using CP to provide a probabilistic safety guarantee (\cite{kwon2025conformalized}). Different from the above two works that apply CP to deal with the measurement noise of surrounding obstacles,  CP is employed in \cite{yang2023safe} and \cite{11007767} to quantify state estimation uncertainty and construct a probabilistic CBF. As far as the authors know,  how to further handle the measurement noise of surrounding obstacles using CP when constructing CBFs, has still not been considered.



Parallel to the aforementioned research progress, differentiable (parameteric) optimization (\cite{agrawal2019differentiable}) has been introduced in the area of CBF-based safety-critical control in two different aspects. On the one hand, differentiable optimization is utilized to handle the obstacle avoidance of shape-rich robots and obstacles. 
In \cite{kaypak2025safe} and \cite{11124847},  the scaling factor between two ellipsoids is chosen as a CBF for collision avoidance and solves the resulting constraints via QP.
Nevertheless, as described in \cite[Remark 1]{kaypak2025safe}, the presence of multiple constraints poses significant challenges to the feasibility of the resulting QP-based control laws. 
On the other hand, the feasibility guarantee problem has recently been solved by the (feasible space) volume-CBF method (\cite{parwana2025feasible, 11220207,wu2025polytope}), which could potentially be combined with differentiable optimization to accelerate computing efficiency. 
In detail, the volume of the largest inscribed ellipsoid/ball within the state-dependent polytope in the input space is defined as the volume-CBF in \cite{parwana2025feasible} and \cite{11220207}. 
By ensuring that the volume of this ellipsoid is strictly positive, the persistent feasibility of the QP under multiple constraints is achieved.
However, the gradient calculation of the volume of an ellipsoid typically relies on heuristic gradient calculations, which are time-consuming. 
In \cite{wu2025polytope}, the heuristic gradient calculation is avoided by utilizing the nonsmooth parameteric optimization and the nonsmooth CBF theory. How to further develop a more concise, differentiable optimization-based feasibility guarantee method remains to be studied.

Motivated by the above discussions, this paper represents both obstacles and the robot using ellipsoids, where obstacle ellipsoids are generated in real time based on the noisy sensor data. 
Besides, to account for the impact of perceptual noise on safety, CP is employed to quantify the uncertainties arising from sensor noise and prediction errors. 
In addition, a safety-critical controller based on multilayer differentiable optimization is proposed. 
First, inspired by \cite{kaypak2025safe}, the second-order conic program (SOCP) is used to compute the minimum scaling factor between ellipsoids for collision avoidance.
Following in line with the authors' prior work \cite{11220207}, the volume of the largest inscribed ellipsoid within the feasible space is obtained through a semidefinite program (SDP), serving as a persistent feasibility constraint.
Then, the control input is obtained by solving a QP that integrates these constraints. 
Among the above procedures, our main contributions are as follows:

\begin{itemize}
    \item A risk-aware minimum ellipsoid generation method under the feedback of online sensor data is given. In detail, CP is employed to calibrate the generated ellipsoids, so as to handle the uncertainty introduced by sensor noise and prediction errors. 
    \item  A SOCP-SDP-QP-based safety-critical control law is proposed, where
   two differentiable optimization layers (i.e., SOCP and  SDP) are introduced to build CBFs for collision detection and feasibility guarantee, respectively.  
    \item The efficiency of the proposed safety controller is validated in numerical simulation tests where a vehicle is driven to perform safe navigation in unknown environments.
\end{itemize}

The remaining parts of this paper are organized as follows.
Section \ref{sec:pf} gives the preliminaries and problem formulation.
Section \ref{sec:main} presents the main results of this paper, including the conformal prediction-based obstacle detection and the differentiable optimization layered safety controller.
Subsequently, several groups of simulation results are given in Section \ref{sec:sim} to validate the effectiveness of the proposed method.
Finally, Section \ref{sec:conclusion} summarizes this paper.

\emph{Notations:} For any integer $N$, $[N]=\{1, 2,\ldots,N\}$ denotes the set of integers
from $1$ to $N$.
A class function $C^{k}$ function is a function that has continuous derivatives up to order $k$. 
$\Gamma(\cdot)$ is the Gamma function.
The extended class -$\mathcal{K}_\infty$ function is a strictly increasing function $\alpha:\reals\to\reals$ with $\alpha(0)=0$. 
$L_fh(\bm x)=\frac{\D h}{\D \bm x}f(\bm x)$ and $L_gh(\bm x)=\frac{\D h}{\D \bm x}g(\bm x)$ are the Lie derivatives of $h$ along $f$ and $g$ at $\bm x$, respectively.

\section{PRELIMINARIES and problem Formulation}\label{sec:pf}
\vspace{-2mm}
Consider a nonlinear, control-affine system of the form
\begin{align}
  \label{equ:sys1}
  \dot{\bm x}&=f(\bm x)+g(\bm x)\bm u,
\end{align}
where $\bm x \in\mathcal{X}\subset\mathbb{R}^n$ is the state, $\bm u\in\mathcal{U}\subset\mathbb{R}^m$ is the control input.
$f:\mathcal{X} \to \mathbb{R}^n$ and $g:\mathcal{X}\to \mathbb{R}^{n\times m}$ are sufficiently smooth functions.
The set $\mathcal{U}$ is a bounded polytopic set, defined as $\mathcal{U}=\{\bm u\in \reals^m|\bm A_m\bm u\leq \bm b_m\}$, where $\bm A_m\in\reals^{2m\times m}$, $\bm b_m\in\reals^{2m}$ are both constant matrices.
Besides, the physical domain of the robot at state $\bm x\in \X$ is approximately modeled as an ellipsoid: 
\begin{align}
  &\R(\bm x)=\{\bm p\in\reals^{n_p}|\F_{\R}(\bm p, \bm \theta_\R(\bm x))\leq 1\},\\
  &\F_{\R}(\bm p, \bm \theta_\R(\bm x)) = (\bm p - \bm \mu_\R(\bm x))^\top \bm Q_\R(\bm x) (\bm p - \bm \mu_\R(\bm x)).
\end{align}

The parameter $\bm p$ is used to evaluate the function, with its dimension restricted to 2 or 3 (i.e., $n_p=2$ or $3$), representing to the 2D or 3D cases, respectively. The parameter $\bm \theta_{\mathcal{R}}$ is dependent on $\bm x$, in specific examples, $\bm \theta_{\mathcal{R}}$ is  usually determined by the position and orientation of the rigid shape $\R(\bm x)$. 

\vspace{-1mm}
\subsection{Preliminaries}
\vspace{-1mm}
The CBF based safety-critical control method in \cite{8796030} is first reviewed for the sake of completeness. 
The system (\ref{equ:sys1}) is called safe with respect to the set $\mathcal{S}$ if there exists a controller $\bm u$ rendering the set $\mathcal{S}$ forward invariant, i.e. for any initial state $\bm x(0)\in\mathcal{S}$, the resulting trajectory $\bm x(t)$ remains within $\mathcal{S}$ for all $t\geq 0$.

\emph{Definition 1 (CBF):} A continuously differentiable function $h:\mathcal{X}\to\mathbb{R}$ is a CBF for system (\ref{equ:sys1}) on the set $\mathcal{S}=\{\bm x\in\mathcal{X}|h(\bm x)\geq 0\}$ if there exists an extended class-$\mathcal{K}_\infty$ function $\phi$ such that for all $\bm x\in\mathcal{X}$,
\begin{align}
  \label{equ:cbf}
  \sup_{\bm u\in\mathcal{U}}[L_fh(\bm x)+L_gh(\bm x)\bm u]\geq -\phi(h(\bm x)).
\end{align}
For brevity, the function $\phi$ is often chosen as a linear function, i.e., $\phi(h(\bm x))=\kappa h(\bm x)$ with $\kappa>0$.

\emph{Lemma 1:} Given a CBF $h(\bm x)$ for system (\ref{equ:sys1}) on the set $\mathcal{S}$, any Lipschitz continuous controller $\bm u:\mathcal{X}\to\mathcal{U}$ that satisfies (\ref{equ:cbf}) renders 
the system (\ref{equ:sys1}) safe w.r.t the set $\mathcal{S}$. 

The following latest results about differentiable optimization-based collision detection are also introduced in this paper.

\emph{Definition 2 \cite[Scaling functions]{11124847}:} For a closed set $\mathcal{A}\subset\reals^{n_p}$ with non-empty interior, the function $\F_\mathcal{A}(\bm p, {\bm \theta}):\reals^{n_p}\times\reals^{n_\theta}\to\reals$ is called a scaling function with parameter $\bm \theta$ if $\F_\mathcal{A}$ is convex with respect to its first argument and satisfies
 $ \mathcal{A} = \{\bm p\in\reals^{n_p}|\F_\mathcal{A}(\bm p, \bm \theta)\leq 1\}$.

\emph{Lemma 2 \cite[Theorem 2]{11124847}:} Let $\F_\mathcal{O}$ and $\F_\mathcal{R}$ be scaling functions for the sets $\mathcal{O}$ and $\mathcal{R}$, respectively, with $\mathcal{O}\cap\mathcal{R}=\emptyset$. 
Assume that one of $\F_\mathcal{O}$ and $\F_\mathcal{R}$ has a positive definite Hessian with respect to $\bm p$. 
If the scaling functions are $C^{k+1}$ (with $k\geq 1$) with respect to $\bm p$ and $\bm \theta$, then
 the optimal value of the following parameteric optimization
\begin{subequations}
  \label{equ:scaling_factor_obs}
  \begin{align}
    \label{equ:scaling_factor_obj_obs}
    &\gamma^*(\bm x) = \min_{\bm p}~\F_{\obs}(\bm p, \bm \theta) \\
    \label{equ:scaling_factor_cons_obs}
    &\text{s.t.}~\F_{\R}(\bm p, \bm \theta_\R(\bm x)) \leq 1. 
  \end{align}
\end{subequations}
is $C^{k}$ in $\bm \theta$. 

In \cite{11124847}, it is also shown that  $\gamma^*(\bm x)>1$ if and only if the robot and obstacle have not intersected, that is, 
$ \obs\cap\R(\bm x)=\emptyset$.

\subsection{Problem Formulation}
\vspace{-2mm}
The following introduces the risk-aware navigation in an unknown environment. 
Assume that the obstacles are perceived by the sensors as a collection of point clouds $\{P_i\}_{i=1}^N$, where $P_i=\{\bm p_{i,j}\}_{j=1}^{n_i}$ corresponds to the point cloud associated with the $i$-th obstacle. 

Referring to \cite{kaypak2025safe},  the obstacles around the robot are represented by ellipsoids.
For each point cloud, the parameters of the minimum volume enclosing ellipsoid are obtained through a fitting function 
\begin{align}
\label{equ:fitting_function}
\hat{\bm \theta}_i = f_\text{Fit}(P_i),
\end{align}
where $\hat{\bm \theta}_i$ denotes the parameters of the fitted ellipsoid (e.g., center $\hat{\bm \mu}_i$ and shape matrix $\hat{\bm Q}_i$). Then, the $i$-th elliptical obstacle can be represented as
\begin{align}
&\hat \obs_i=\{\bm p\in\reals^{n_p}|\F_{\hat{\obs}_i}(\bm p, \hat{\bm \theta}_i)\leq 1\},\\
&\F_{\hat{\obs}_i}(\bm p, \hat{\bm \theta}_i) = (\bm p - \hat{\bm \mu}_i)^\top \hat{\bm Q}_i (\bm p - \hat{\bm \mu}_i).
\end{align}

\begin{remark}
In real applications, the fitting function $f_\text{Fit}(\cdot)$ can be implemented using different algorithms, such as the optimization-based methods \cite[Section 8.4]{boyd2004convex} or deep learning-based methods (\cite{10142208}). 
One can choose suitable fitting functions based on the specific application requirements and computational resources.  
\end{remark}

For the case that the measured point clouds are perfect, one can apply the method in \eqref{equ:scaling_factor_obs} to achieve the collision detection between elliptical sets. 
In detail, the minimum scaling factor $\hat\gamma_i^*(\bm x)$ between the robot $\mathcal{R}(\bm x)$ and the $i$-th obstacle $\hat \obs_i$ can be computed through the following optimization problem
\begin{subequations}
\label{equ:scaling_factor}
\begin{align}
\label{equ:scaling_factor_obj}
&\hat\gamma_i^*(\bm x) = \min_{\bm p}~\F_{\hat\obs_i}(\bm p, \bm \theta_i) \\
\label{equ:scaling_factor_cons}
&\text{s.t.}~\F_{\R}(\bm p, \bm \theta_\R(\bm x)) \leq 1. 
\end{align}
\end{subequations}
As seen from the above content, $\hat\gamma_i^*(\bm x)>1$ indicates that the robot and obstacle have not intersected.
However, it should be emphasized that due to measurement noise and limitations in prediction accuracy, it is challenging in real applications to ensure that the constructed obstacle regions $\hat \obs_i$ fully encapsulate all actual point cloud data. 

Motivated by this observation, the case that the point clouds are measured under the sensor noise is considered. 
Following the approach in \cite{yang2023safe}, each observed point is modeled as
\begin{align}
\label{equ:sensor_model}
\bm p_{i,j} =  \bar{\bm p}_{i,j} + \bm \nu_{i,j},
\end{align}
where $\bar{\bm p}_{i,j}$ represents the true position of the point, and $\bm \nu_{i,j}$ is an additive random variable representing the sensor noise, following an unknown distribution. 
For simplicity, let $\bar{P}_i=\{\bar{\bm p}_{i,j}\}_{j=1}^{n_i}$ collect the true values of the $i$-th obstacle's point cloud.
Based on the above discussions, the formal statement of the risk-aware navigation problem under noisy measurement can be stated as follows.

\textbf{Problem Statement:} Consider the system described in (\ref{equ:sys1}) with initial state $\bm x(0) \in \X$ and its ellipsoidal shape $\R(\bm x)$, and the sensor model is given in (\ref{equ:sensor_model}).
Let $T \subset R_{\geq 0}$ be a time interval, and $\alpha$ denote a failure probability.
The objective of this paper is:
\begin{itemize}
\item [1)] Construct another ellipsoid   $\obs_i$ from the robot's sensor measurements $P_i$ for every $i\in[N]$, such that the true point cloud $\bar{P}_i $ is covered by $\obs_i$ in the sense of   
$$
\text{Prob}(\bar{P}_i \subseteq \obs_i) \geq 1 - \alpha.
$$

\item [2)]  Furthermore, design a safety-critical control law $\bm u$ such that the robot could avoid the collision with the surrounding obstacles, i.e., 
$$\R(x(t))\cap(\cap_i^N\obs_i) =\emptyset, ~ \forall t \in T.$$
\end{itemize}

\begin{figure}[t]
  \centering
  \includegraphics[width=0.475\textwidth, trim=145 148 176
   12, clip]{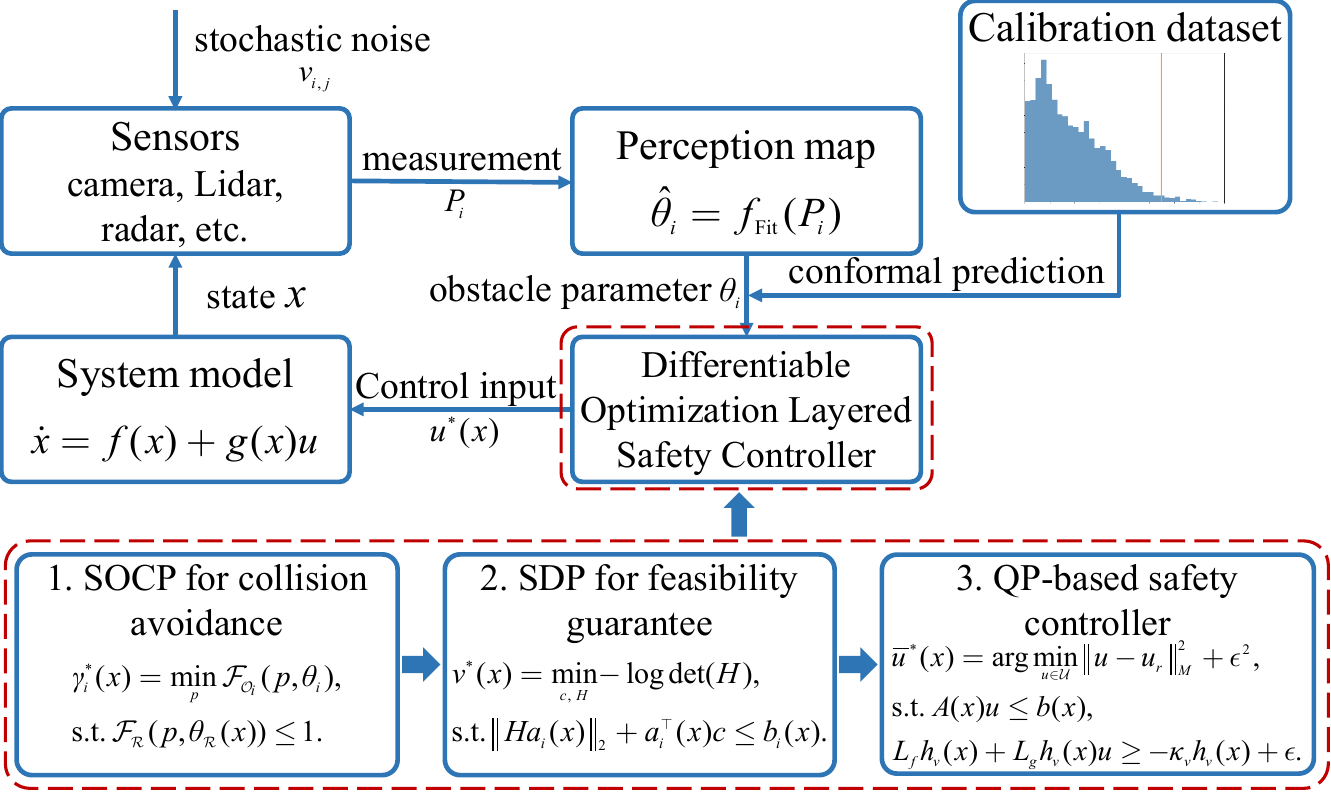}
   \vspace{-0.2cm}
  \caption{Overall framework of the proposed method.}
  \label{fig:framework}
\end{figure}

\section{Main Results}\label{sec:main}
\vspace{-2mm}
In this section, the main results of the paper are presented. The overall framework is illustrated in Fig. \ref{fig:framework}.
First, conformal prediction is applied for calibrating the perception system, enabling the identification of reliable obstacle regions that account for sensor uncertainty. 
Subsequently, a differentiable optimization layered safety controller is proposed to simultaneously guarantee safety and feasibility.
\subsection{Calibrating the Perception System}
Conformal prediction enables rigorous uncertainty quantification in perception, ensuring reliable obstacle detection. 
The first objective of this paper is to establish a robust detection framework that accounts for uncertainty arising from both sensor measurements and prediction errors. 
To achieve this, a calibration set $D_{\text{cal}} = \{E_i\}_{i=1}^{N_{\text{cal}}}$, where $N_{\text{cal}}\in\mathbb{N}$,  is constructed. 
Each element $E_i$ in the calibration set includes the true point cloud $\bar{P}_i$ and the parameters of the predicted obstacle ellipsoid $\hat{\theta}_i$ from the sensor measurements. 
In the calibration step, the goal is to determine the minimal scaling factor $\gamma_i'$ such that the inflated ellipsoid $\hat{\obs}_i'(\gamma_i')$ can fully cover the true point cloud $\bar{P}_i$, i.e. $\bar{P}_i\subseteq \hat{\obs}_i'(\gamma_i')$, where 
\begin{align}
  &\hat{\obs}_i'(\hat{\gamma}_i)=\{\bm p\in\reals^{n_p}|\F_{\hat{\obs}_i}(\bm p, \hat{\bm\theta}_i) \leq \gamma_i'\}.
\end{align}
Here, $\gamma_i'$ is the scaling factor that inflates the original ellipsoid along each axis by a factor of $\sqrt{\gamma_i'}$. 
Similar to \cite{mei2024perceive} and \cite{kwon2025conformalized}, the minimal scaling factor is defined as the nonconformity score for each calibration element, which can be computed by solving the following optimization problem
\begin{subequations}
  \label{equ:scaling_factor_cal}
  \begin{align}
    \label{equ:scaling_factor_cal_obj}
    &\gamma_i' = \max \quad \gamma_{i,j}' \\
    \label{equ:scaling_factor_cal_cons}   
    &\text{s.t.}  \quad \gamma_{i,j}' \geq 1,  \quad \F_{\hat{\obs}_i}(\bar{\bm p}_{i,j},\hat{\bm\theta}_i) \leq \gamma_{i,j}', \quad \forall j\in[n_i].
  \end{align}
\end{subequations}
Referring to \cite{shafer2008tutorial}, given the nonconformity scores over the calibration set $D_{\text{cal}}$ and a user-specified miscoverage level $\alpha$, the nonconformity score $\gamma_i'$ of a new sample $(\bar{P}_i, \hat{\bm \theta}_i)$, satisfies the following probabilistic guarantee
\begin{align}
  \label{equ:cp_guarantee}
  \text{Prob}\left(\gamma_i' \leq \Delta\right) \geq 1 - \alpha,
\end{align}
where $\Delta$ is chosen as the $\lceil (1-\alpha)(N_{\text{cal}} + 1) \rceil$-th quantile of the nonconformity scores from the calibration set.
\begin{thm}
  By inflating the original ellipsoid by a factor of $\sqrt{\Delta}$ along each axis, the true obstacle point cloud is fully contained within the inflated ellipsoid with probability at least $1-\alpha$, i.e.,
  \begin{align}
    \text{Prob}(\bar{P}_i \subseteq \hat{\obs}_i'(\Delta)) \geq 1 - \alpha.
  \end{align}
\label{thm:cp_guarantee} 
\end{thm}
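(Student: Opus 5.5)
The plan is to reduce the containment event to the scalar event $\{\gamma_i'\leq\Delta\}$ appearing in (\ref{equ:cp_guarantee}), and then use that probabilistic guarantee directly. We will show the inclusion of events
\begin{align*}
\{\gamma_i'\leq \Delta\}\subseteq\{\bar{P}_i\subseteq \hat{\obs}_i'(\Delta)\}.
\end{align*}
Monotonicity of probability then gives $\text{Prob}(\bar{P}_i\subseteq\hat{\obs}_i'(\Delta))\geq\text{Prob}(\gamma_i'\leq\Delta)\geq 1-\alpha$.

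\textbf{Interpreting the nonconformity score.} First we make explicit what (\ref{equ:scaling_factor_cal}) computes. For each $j\in[n_i]$, the smallest admissible $\gamma_{i,j}'$ is $\max\{1,\F_{\hat{\obs}_i}(\bar{\bm p}_{i,j},\hat{\bm\theta}_i)\}$. Hence
\begin{align*}
\gamma_i'=\max\Big\{1,\max_{j\in[n_i]}\F_{\hat{\obs}_i}(\bar{\bm p}_{i,j},\hat{\bm\theta}_i)\Big\},
\end{align*}
so that $\F_{\hat{\obs}_i}(\bar{\bm p}_{i,j},\hat{\bm\theta}_i)\leq\gamma_i'$ for every $j$. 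Read literally, the ``$\max$'' in (\ref{equ:scaling_factor_cal}) is unbounded above. We will therefore state this minimal-value interpretation explicitly, consistent with the text calling $\gamma_i'$ the minimal scaling factor. This is the one place where some care in reading the definition is needed.

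\textbf{Set inclusion.} Next, on the event $\gamma_i'\leq\Delta$, each true point satisfies $\F_{\hat{\obs}_i}(\bar{\bm p}_{i,j},\hat{\bm\theta}_i)\leq\gamma_i'\leq\Delta$. By the definition of $\hat{\obs}_i'(\cdot)$, this means $\bar{\bm p}_{i,j}\in\hat{\obs}_i'(\Delta)$ for all $j$, i.e.\ $\bar{P}_i\subseteq\hat{\obs}_i'(\Delta)$. We will also note the geometric fact used in the statement. Since $\F_{\hat{\obs}_i}$ is a quadratic form in $\bm p-\hat{\bm\mu}_i$, the sublevel set at level $\Delta$ is the unit sublevel set scaled about $\hat{\bm\mu}_i$ by $\sqrt{\Delta}$ along each principal axis. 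Thus $\hat{\obs}_i'(\Delta)$ is exactly the ellipsoid inflated by $\sqrt{\Delta}$, and since $\Delta\geq 1$ it contains $\hat{\obs}_i$.

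\textbf{Invoking the conformal guarantee.} Finally, we apply (\ref{equ:cp_guarantee}). It relies on exchangeability of the calibration scores and the test score, which is inherited from the standard split conformal result of \cite{shafer2008tutorial}. It also relies on $\Delta$ being the $\lceil(1-\alpha)(N_{\text{cal}}+1)\rceil$-th smallest calibration score; for completeness we will briefly recall the rank argument behind this. Combining the guarantee with the event inclusion above yields the claim.
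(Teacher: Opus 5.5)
Your proposal is correct and follows essentially the same route as the paper: both arguments reduce the containment event to the scalar event $\{\gamma_i'\leq\Delta\}$ and then invoke (\ref{equ:cp_guarantee}). The paper passes through the nesting $\hat{\obs}_i'(\gamma_i')\subseteq\hat{\obs}_i'(\Delta)$ and a product of probabilities whose first factor equals one. Your pointwise event inclusion reaches the same conclusion more cleanly, and it also makes explicit the minimal-value reading of (\ref{equ:scaling_factor_cal}) that the paper leaves implicit.
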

\begin{pf}
Given (\ref{equ:cp_guarantee}), for a new sample $(\bar{P}_i, \hat{\bm \theta}_i)$, it holds that
\begin{align}
  \label{equ:proof_cp}
  &\text{Prob}(\bar{P}_i \subseteq \hat{\obs}_i'(\Delta)) \nonumber\\
  &\geq\text{Prob}(\bar{P}_i \subseteq \hat{\obs}_i'(\gamma_i'))\cdot\text{Prob}(\hat{\obs}_i'(\gamma_i') \subseteq \hat{\obs}_i'(\Delta))\nonumber\\
  &=\text{Prob}(\hat{\obs}_i'(\gamma_i') \subseteq \hat{\obs}_i'(\Delta))
\end{align}
Based on the definition of the scaling function, $\hat{\obs}_i'(\Delta)$ and $\hat{\obs}_i'(\gamma_i')$ are both ellipsoids centered at $\hat{\bm \mu}_i$ and inflated along each axis by a factor of $\sqrt{\Delta}$ and $\sqrt{\gamma_i'}$, respectively. 
Therefore, the condition $\hat{\obs}_i'(\gamma_i') \subseteq \hat{\obs}_i'(\Delta)$ is equivalent to $\gamma_i' \leq \Delta$.
(\ref{equ:proof_cp}) can be rewritten as
\begin{align}
  &\text{Prob}(\bar{P}_i \subseteq \hat{\obs}_i'(\Delta)) \nonumber\\
  &=\text{Prob}\left(\gamma_i' \leq \Delta\right)\geq 1 - \alpha.
\end{align}
Thus, the true obstacle point cloud is fully contained within the inflated ellipsoid with probability at least $1-\alpha$.
\end{pf}
For brevity, the following equations are used to represent the inflated obstacle regions 
\begin{align}
  &\obs_i =\hat{\obs}_i'(\Delta) = \{\bm p\in\reals^{n_p}|\F_{\obs_i}(\bm p, {\bm \theta}_i) \leq 1\},\\
  &\F_{\obs_i}(\bm p, {\bm \theta}_i) = (\bm p - {\bm \mu}_i)^\top {\bm Q}_i (\bm p - {\bm \mu}_i).
\end{align}
where $\bm \mu_i=\hat{\bm \mu}_i$ and ${\bm Q}_i=\hat{\bm Q}_i/\Delta$.

\begin{remark}
Limited calibration set size may cause the actual coverage to differ from the nominal level.
To address this, conformal prediction can offer a dataset-conditional guarantee (\cite{pmlr-v25-vovk12, kwon2025conformalized}) such that, given a fixed calibration set, the probability that the inflated region covers the true obstacle point cloud for all future samples satisfies
\[
  \text{Prob}(\bar{P}_d \subseteq \obs_d|D_{\text{cal}}) \geq \text{Beta}_{N_{\text{cal}} - v + 1, v}(\delta),
\]
where $v = \lfloor \hat{\alpha}(N_{\text{cal}} + 1) \rfloor$ and $\text{Beta}_{N_{\text{cal}} - v + 1, v}(\delta)$ is the $\delta$-quantile of the Beta distribution with parameters $N_{\text{cal}} - v + 1$ and $v$. 
By choosing $\hat{\alpha}$ appropriately, the desired coverage level $1-\alpha$ is achieved with probability at least $1-\delta$ over calibration dataset sampling. 
\end{remark}

\vspace{-0.2cm}
\subsection{Differentiable Optimization Layered Safety Controller}
\vspace{-0.2cm}
{\bf Differentiable optimization-based collision avoidance:} With the inflated obstacle regions, the next step is to design a safety controller that can ensure collision-free navigation. 
To this end, the scaling factor $\gamma_i^*(\bm x)$ between the robot $\R(\bm x)$ and the $i$-th inflated obstacle region $\obs_i$ is constructed as a CBF. In detail, define the scaling factor $\gamma_i^*(\bm x)$ as the optimal value of the SOCP 
\begin{subequations}
  \label{equ:scaling_factor_obs_rob}
  \begin{align}
    &\gamma_i^*(\bm x) = \min_{\bm p}~\F_{\obs_i}(\bm p, \bm \theta_i) \\
    &\text{s.t.}~\F_{\R}(\bm p, \bm \theta_\R(\bm x)) \leq 1. 
  \end{align}
\end{subequations}

According to Lemma 2, $\gamma_i^*(\bm x)$ is continuously differentiable with respect to $\bm x$.
Thus, the CBF for the $i$-th obstacle is defined as
\begin{align}
  h_i(\bm x) = \gamma_i^*(\bm x) - \gamma_0.
\end{align}
where $\gamma_0>1$ is a predefined safety margin. 
And the safe set with respect to the $i$-th obstacle is defined as $\cal{S}_i = \{\bm x\in\mathcal{X}|h_i(\bm x)\geq 0\}$.
By applying the chain rule, the Lie derivatives of $h_i(\bm x)$ can be computed as
\begin{align}
  \label{equ:lie_derivatives}
  L_fh_i(\bm x) &= \frac{\D \gamma_i^*(\bm x)}{\D \bm x} f(\bm x),
  L_gh_i(\bm x) = \frac{\D \gamma_i^*(\bm x)}{\D \bm x} g(\bm x),\\
  \label{equ:chain_rule}
  \frac{\D \gamma_i^*(\bm x)}{\D \bm x} &= \frac{\D \gamma_i^*(\bm x)}{\D \bm \theta_{\R}}\frac{\D \bm \theta_{\R}}{\D \bm x}.
\end{align}
In (\ref{equ:chain_rule}), ${\D \bm \theta_{\R}}/{\D \bm x}$ can be directly calculated based on the definition of $\bm \theta_{\R}(\bm x)$. 
To compute ${\D \gamma_i^*(\bm x)}/{\D \bm \theta_{\R}}$, \cite{10184036} and \cite{11124847} provide rigorous analytical methods. 
However, for cases where the input relative order is one, a simpler approach can be obtained through the following definition and lemma.

\emph{Definition 3 (Regular point):} A feasible point $\bm p^*$ of the optimization problem
\begin{subequations}
  \label{equ:general_optimization}
  \begin{align}
  \label{equ:general_optimization_obj}
  \min_{\bm p} & \quad F_0(\bm p, \bm \theta_{0}) \\
  \label{equ:general_optimization_cons}
  \text{s.t.} & \quad G_i(\bm p, \bm \theta_{i}) \leq 0, \quad i=[m],
\end{align}
\end{subequations}
is called a regular point if the gradients of the active constraints at $\bm p^*$ are linearly independent.  

\emph{Definition 4 (Degenerate inequality constraint):} An inequality constraint $G_i(\bm p, \bm \theta_{i})\leq 0$ is called degenerate at the solution $\bm p^*$ of the optimization problem in (\ref{equ:general_optimization}) if the corresponding Lagrange multiplier is zero.

\emph{Lemma 3 (\cite{castillo2008sensitivity}):} Assume that the solution $\bm p^*$ of the optimization problem (\ref{equ:general_optimization}) is a regular point and that no degenerate inequality constraints exist. 
Then, the sensitivity of the objective function with respect to the parameter $\bm \theta_{0}$ and $\bm \theta_{i}$ is given by the gradient of the Lagrangian function.

To apply this result to problem (\ref{equ:scaling_factor_obs_rob}), it is necessary to verify that the optimal solution satisfies the required regularity conditions and that the inequality constraints is not degenerate. 
The following theorem confirms that these conditions are met.

\begin{thm}\label{th2}   
The optimal solution $\bm p^*$ of the problem (\ref{equ:scaling_factor_obs_rob}) is a regular point and no degenerate inequality constraints exist.
\end{thm}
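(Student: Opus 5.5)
The problem (\ref{equ:scaling_factor_obs_rob}) has a single inequality constraint, $G(\bm p,\bm\theta_\R(\bm x)) := \F_{\R}(\bm p,\bm\theta_\R(\bm x)) - 1 \le 0$. So regularity reduces to showing that the gradient of this constraint is nonzero whenever it is active. Non-degeneracy reduces to showing that the constraint is active with a strictly positive Lagrange multiplier. The standing assumption I will use is that the robot and inflated obstacle are disjoint on the set of interest. This holds on $\mathcal{S}_i$, since there $\gamma_i^*(\bm x)\ge\gamma_0>1$, and by the remark after Lemma 2 this is equivalent to $\obs_i\cap\R(\bm x)=\emptyset$. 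I will also use that $\bm Q_\R$ and $\bm Q_i$ are positive definite, so both scaling functions are strictly convex quadratics. This makes the problem a convex program with a unique minimizer $\bm p^*$, and Slater's condition holds since $\bm p=\bm\mu_\R$ is strictly feasible. Hence the KKT conditions are necessary and sufficient, and a multiplier $\lambda^*\ge 0$ exists with
\begin{align*}
2\bm Q_i(\bm p^*-\bm\mu_i) + 2\lambda^*\bm Q_\R(\bm x)(\bm p^*-\bm\mu_\R(\bm x)) = 0,\quad \lambda^*\big(\F_\R(\bm p^*,\bm\theta_\R)-1\big)=0.
\end{align*}

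\textbf{Step 1: the constraint is active.} Suppose $\F_\R(\bm p^*,\bm\theta_\R)<1$. Then $\lambda^*=0$, so stationarity gives $\bm Q_i(\bm p^*-\bm\mu_i)=0$, that is, $\bm p^*=\bm\mu_i$, and hence $\gamma_i^*=0$. In that case $\bm\mu_i\in\R(\bm x)\cap\obs_i$, which contradicts disjointness; more simply, it contradicts $\gamma_i^*>1$. Therefore $\F_\R(\bm p^*,\bm\theta_\R)=1$.

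\textbf{Step 2: regularity.} The active gradient is $\nabla_{\bm p}G = 2\bm Q_\R(\bm p^*-\bm\mu_\R)$. By Step 1, $(\bm p^*-\bm\mu_\R)^\top\bm Q_\R(\bm p^*-\bm\mu_\R)=1$, so $\bm p^*\neq\bm\mu_\R$. Since $\bm Q_\R\succ 0$, the gradient is nonzero. A single nonzero vector is linearly independent, so $\bm p^*$ is a regular point in the sense of Definition 3.

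\textbf{Step 3: non-degeneracy.} Suppose $\lambda^*=0$. Stationarity again forces $\bm p^*=\bm\mu_i$ and $\gamma_i^*=0<1$, a contradiction. Hence $\lambda^*>0$, so the only inequality constraint is not degenerate in the sense of Definition 4, and Lemma 3 applies.

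The main obstacle is pinning down exactly where the disjointness (or $\gamma_i^*>1$) hypothesis is needed. If the obstacle center lies inside the robot ellipsoid, both conclusions genuinely fail, so the statement must be read on the safe set, where $\gamma_i^*\ge\gamma_0>1$. I would state this restriction explicitly and justify it via the forward invariance provided by Lemma 1. A secondary point is existence and uniqueness of $\bm p^*$: this follows from compactness of $\R(\bm x)$ and strict convexity of $\F_{\obs_i}$ in $\bm p$.
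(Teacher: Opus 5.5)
Your proof is correct and is essentially the paper's argument: KKT stationarity plus $\bm p^*\neq\bm\mu_i$ (from the separation of $\R(\bm x)$ and $\obs_i$) forces $\lambda_i^*>0$, after which the single active constraint has a nonzero gradient. The differences are minor: you get the nonzero gradient from activity and $\bm Q_\R\succ 0$ rather than from stationarity, and you make explicit the disjointness hypothesis and Slater's condition, which the paper leaves implicit. One small correction to your closing remark: regularity never fails, since by your own Step 2 it holds whenever the constraint is active and vacuously otherwise. So the hypothesis is needed only for non-degeneracy, and there only in the weaker form $\bm\mu_i\notin\R(\bm x)$, i.e.\ $\gamma_i^*>0$.
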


\begin{pf}    
For (\ref{equ:scaling_factor_obs_rob}), the Lagrangian function can be expressed as
\begin{align}
  \mathcal{L}(\bm p, \lambda_i) = \F_{\obs_i}(\bm p, {\bm \theta}_i) + \lambda_i(\F_{\R}(\bm p, \bm \theta_{\R}) - 1).  
\end{align}
Let $\bm p^*$ and $\lambda_i^*$ denote the optimal solution and the corresponding Lagrange multiplier of (\ref{equ:scaling_factor_obs_rob}), respectively. The KKT conditions yield
\begin{subequations}
  \label{equ:kkt_conditions}
  \begin{align}
    \label{equ:kkt_conditions_grad}
    &\frac{\D \F_{\obs_i}}{\D \bm p}(\bm p^*, {\bm \theta}_i) + \lambda_i^* \frac{\D \F_{\R}}{\D \bm p}(\bm p^*, \bm \theta_{\R}) = 0,\\
    \label{equ:kkt_conditions_primal}
    &\lambda_i^* (\F_{\R}(\bm p^*, \bm \theta_{\R}) - 1) = 0,\\
    \label{equ:kkt_conditions_dual}
    &\lambda_i^* \geq 0.
  \end{align}
\end{subequations} 
For $\F_{\obs_i}$, one have 
\[
  \frac{\D \F_{\obs_i}}{\D \bm p}(\bm p^*, {\bm \theta}_i)=2(\bm p^*-\bm \mu_i)^\top\bm Q_i.
\] 
Due to $\bm p^*\nsubseteq\obs_i$ and $\bm Q_i$ is a positive-definite matrix, it follows that the gradient is non-zero at this point $\bm p^*$. 
Therefore, by (\ref{equ:kkt_conditions_grad}) and (\ref{equ:kkt_conditions_dual}), one have $\lambda_i^* > 0$ and the gradient ${\D \F_{\R}(\bm p^*, \bm \theta_{\R})}/{\D \bm p}$ is also non-zero.
Due to the problem only having one inequality constraint and the gradient of the active constraint is non-zero at $\bm p^*$, it indicates that $\bm p^*$ is a regular point.
Besides, since $\lambda_i^* > 0$, there are no degenerate inequality constraints at the solution $\bm p^*$.
\end{pf}

Based on the result of Theorem \ref{th2}, the sensitivity of the scaling factor $\gamma_i^*(\bm x)$ with respect to the robot parameters $\bm \theta_{\R}$ can be computed through Lemma 3 as
\begin{align}
  \frac{\D \gamma_i^*(\bm x)}{\D \bm \theta_{\R}} = \lambda_i^* \frac{\D \F_{\R}(\bm p^*, \bm \theta_{\R})}{\D \bm \theta_{\R}}, 
\end{align}
where $\lambda_i^*$ can be solved through (\ref{equ:kkt_conditions_grad}). 
Based on Lemma 1, any Lipschitz continuous control input satisfies 
$
\bm u\in\mathcal{U}_{h_i}(\bm x) = \{\bm u\in\reals^m|L_fh_i(\bm x) + L_gh_i(\bm x)\bm u \geq -\kappa_i h_i(\bm x)\}
$ can guarantee the forward invariance of the safe set $\cal{S}_i$.
\begin{remark}
To ensure collision avoidance with all obstacles, the overall safe set is defined as $\cal{S} = \bigcap_{i=1}^N \cal{S}_i$.
The safety controller can be solved through the following QP:
\begin{subequations}
  \label{equ:safety_qp}
  \begin{align}
    \label{equ:safety_qp_obj}
    &\bm u^*(\bm x) = \arg\min_{\bm u\in\mathcal{U}} \|\bm u - \bm u_{r}\|^2_{\bm M} \\
    \label{equ:safety_qp_cons}
    & \text{s.t.}  ~L_fh_i(\bm x) + L_gh_i(\bm x)\bm u \geq -\kappa_i h_i(\bm x), \forall i\in[N],
  \end{align}
\end{subequations}
where $\bm u_{r}$ is a nominal control input, $\bm M$ is a user-defined positive definite matrix.  
\end{remark}

{\bf Differentiable optimization-based feasibility guarantee:} 
While (\ref{equ:safety_qp}) can guarantee the system safety, the feasibility is not addressed explicitly, which may be compromised in the presence of multiple safety and input constraints. 
To address this issue, a volume-CBF is introduced to ensure the persistent feasibility of (\ref{equ:safety_qp}).
Specifically, the feasible space of control input is first defined as $\mathcal{U}_{F}(\bm x) = \{\bm u \in \mathbb{R}^m|\bm A(\bm x)\bm u\leq \bm b(\bm x)\}$, where
\[
{\bm{A}}(\bm x)=
\begin{bmatrix}
\bm A_m \\
-L_{g}h_1(x) \\
\cdots \\
-L_{g}h_N(x)
\end{bmatrix},~
{\bm{b}}(\bm x)=
\begin{bmatrix}
\bm b_m \\
\kappa_1 h_1(\bm x)+L_{f}h_1(x) \\
\cdots \\
\kappa_N h_N(\bm x)+L_{f}h_N(x)
\end{bmatrix}.
\]
For the optimization problem (\ref{equ:safety_qp}), its feasibility can be characterized by the volume of the largest inscribed ellipsoid within the feasible space $\mathcal{U}_{F}(\bm x)$. 
In other words, (\ref{equ:safety_qp}) admits a feasible solution as long as the ellipsoid maintains a strictly positive volume. 
To quantitatively evaluate the feasibility, the inscribed ellipsoid of the feasible space $\mathcal{U}_F(\bm x)$ can be obtained by solving the following SDP
\begin{subequations}
  \label{equ:inscribed_ellipsoid}
  \begin{align}
    \label{equ:inscribed_ellipsoid_obj}
    &v^*(\bm x)=\min_{\bm c, \bm H}  -\log\det(\bm H) \\
    \label{equ:inscribed_ellipsoid_cons}
    &\text{s.t.}~\|\bm H \bm a_i(\bm x)\|_2+\bm a_i^\top(\bm x)\bm c\leq b_i(\bm x),~\forall i\in[2m+N],
  \end{align}
\end{subequations}
where $\bm c\in\reals^m$ and $\bm H\in\reals^{m\times m}$ are the center and shape matrix of the inscribed ellipsoid.
$\bm a_i^\top(\bm x)$ is the $i$-th row of $\bm A(\bm x)$ and $b_i(\bm x)$ is the $i$-th element of $\bm b(\bm x)$. 
Let $\bm c^*$ and $\bm H^*$ be the optimal solutions. The volume of the largest inscribed ellipsoid is given by 
\[
V(\bm x) = \frac{\pi^{\frac{m}{2}}}{\Gamma(\frac{m}{2}+1)}\det(\bm H^*(\bm x))=\frac{\pi^{\frac{m}{2}}}{\Gamma(\frac{m}{2}+1)}e^{-v^*(\bm x)}.
\]
As described above, the feasibility of (\ref{equ:safety_qp}) can be guaranteed by ensuring the forward invariance of the set $\mathcal{S}_F = \{\bm x\in\mathcal{X}|V(\bm x)>0\}$. 
To tighten the feasibility requirement, a user-defined positive constant $V_0>0$ is introduced.
According, the volume-CBF can be defined as
\begin{align}
  h_v(\bm x) = V(\bm x) - V_0.
\end{align}
Similar to (\ref{equ:lie_derivatives}), the Lie derivatives of $h_v(\bm x)$ can be computed through the chain rule as
\begin{align}
  &L_fh_v(\bm x) = \frac{\D V(\bm x)}{\D \bm x} f(\bm x),~
  L_gh_v(\bm x) = \frac{\D V(\bm x)}{\D \bm x} g(\bm x),
\end{align}
\begin{align}
  \label{equ:chain_rule_volume}
  &\frac{\D V(\bm x)}{\D \bm x} = \frac{\D V(\bm x)}{\D v^*}\frac{\D v^*(\bm x)}{\D \bm x}=\frac{-\pi^{\frac{m}{2}}e^{-v^*(\bm x)}}{\Gamma(\frac{m}{2}+1)}\frac{\D v^*(\bm x)}{\D \bm x},\\
  \label{equ:chain_rule_volume1}
  &\frac{\D v^*(\bm x)}{\D \bm x} =\sum_{j\in \mathcal{A}}  \left(\frac{\D v^*(\bm x)}{\D \bm a_j}\frac{\D \bm a_j^\top(\bm x)}{\D \bm x}\!+\!\frac{\D v^*(\bm x)}{\D b_j}\frac{\D b_j(\bm x)}{\D \bm x}\right),
\end{align}
where $\mathcal{A}=\{j\in[2m+N]|\|\bm H^*\bm a_j(\bm x)\|_2+\bm a_j^\top(\bm x)\bm c^*=b_j(\bm x)\}$ is the index set of the active constraints at the solution.
As described in Lemma 2, $\gamma_i^*$ obtained through (\ref{equ:scaling_factor_obs_rob}) is $C_k, k\geq1$ in $\bm \theta$.
Therefore, the derivatives ${\D \bm a_j^\top(\bm x)}/{\D \bm x}$ and ${\D b_j(\bm x)}/{\D \bm x}$ can be directly calculated based on the definitions of $\bm A(\bm x)$ and $\bm b(\bm x)$.
To compute ${\D v^*(\bm x)}/{\D \bm a_j}$ and ${\D v^*(\bm x)}/{\D b_j}$, Lemma 3 can again be utilized.
Given the Lagrangian function of (\ref{equ:inscribed_ellipsoid}) in optimality conditions
\begin{align}
  \mathcal{L}(\bm c^*, \bm H^*, \bm \lambda^*_E) = &-\log\det(\bm H^*) + \sum_{j\in \mathcal{A}} \lambda_{E,j}^*\Big(\|\bm H^*\bm a_j(\bm x)\|_2
  \nonumber\\ & +\bm a_j^\top(\bm x)\bm c^* - b_j(\bm x)\Big),
\end{align}
where $\bm \lambda^*_E$ is the vector of Lagrange multipliers.
The sensitivity of the optimal value $v^*(\bm x)$ with respect to the parameters $\bm a_j$ and $b_j$ can be computed through the gradient of the Lagrangian function as
\begin{align}
  \!\!\frac{\D v^*(\bm x)}{\D \bm a_i} \!\!= \!\lambda_{E,j}^* \!\left(\!\frac{\bm H^{*2}\bm a_j(\bm x)}{\|\bm H^*\bm a_j(\bm x)\|_2} \!+\! \bm c^*\!\right)\!^\top\!,
  \frac{\D v^*(\bm x)}{\D b_j} \!\!=\! -\lambda_{E,j}^*.
\end{align}

Based on Lemma 1, any Lipschitz continuous control input satisfying
$\bm u\in\mathcal{U}_{h_v}(\bm x) = \{\bm u\in\reals^m|L_fh_v(\bm x) + L_gh_v(\bm x)\bm u \geq -\kappa_v h_v(\bm x)\}$
can guarantee the forward invariance of the feasible set $\cal{S}_F$.

Consequently, to enforce the compatibility of CBF and input constraints and ensure that the system always remains within safe operational limits, the controller can be obtained by solving the following QP
\begin{subequations}
  \label{equ:safety_feasibility_qp}
  \begin{align}
    \label{equ:safety_feasibility_qp_obj}
    &[\bm u^*(\bm x), \epsilon^*(\bm x)] = \arg\min_{\bm u\in\mathcal{U}, \epsilon\geq 0} \|\bm u - \bm u_{r}\|^2_{\bm M} + \epsilon^2\\
    \label{equ:safety_feasibility_qp_cons}
    & \text{s.t.}  ~\bm A(\bm x)\bm u\leq\bm b(\bm x),\\
    \label{equ:safety_feasibility_qp_cons2}
    &L_fh_v(\bm x) + L_gh_v(\bm x)\bm u \geq -\kappa_v h_v(\bm x)-\epsilon.
  \end{align}
\end{subequations}
\begin{thm}   
  Suppose that the optimal solution of (\ref{equ:safety_feasibility_qp}) $\bm u^*(\bm x)$ is locally Lipschitz continuous. 
  Under this control law, if the initial state $\bm x(0)\in\mathcal{S}\cap\mathcal{S}_F$ and $\epsilon^*(\bm x)\leq \kappa_v V_0$, then the system state $\bm x(t)$ will remain in $\mathcal{S}\cap\mathcal{S}_F$ for all $t\geq 0$, perserving the safety and feasibility of the system.
\end{thm}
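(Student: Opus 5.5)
The proof splits into two parts. First, $\mathcal{S}_F$ is shown to be forward invariant using the relaxed volume-CBF constraint (\ref{equ:safety_feasibility_qp_cons2}) together with the slack bound $\epsilon^*\le\kappa_v V_0$. Then safety, i.e.\ forward invariance of $\mathcal{S}=\bigcap_i\mathcal{S}_i$, is inherited from the hard constraint (\ref{equ:safety_feasibility_qp_cons}), whose satisfiability is exactly what membership in $\mathcal{S}_F$ provides. Throughout, the closed loop $\dot{\bm x}=f(\bm x)+g(\bm x)\bm u^*(\bm x)$ has unique solutions on a maximal interval $[0,t_{\max})$. This uses the assumed local Lipschitz continuity of $\bm u^*$ and the smoothness of $f,g$. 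As long as the QP (\ref{equ:safety_feasibility_qp}) is feasible, every constraint holds pointwise along the trajectory.

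\textbf{Step 1 (feasibility set).} Along the trajectory, (\ref{equ:safety_feasibility_qp_cons2}) gives
\begin{align*}
\dot V(\bm x)=\dot h_v(\bm x)\ge -\kappa_v h_v(\bm x)-\epsilon^*(\bm x)=-\kappa_v V(\bm x)+\kappa_v V_0-\epsilon^*(\bm x)\ge -\kappa_v V(\bm x),
\end{align*}
where the last inequality uses $\epsilon^*\le\kappa_v V_0$. By the comparison lemma, $V(\bm x(t))\ge e^{-\kappa_v t}V(\bm x(0))$. Since $V(\bm x(0))>0$, we get $V(\bm x(t))>0$ for all $t$, so $\bm x(t)\in\mathcal{S}_F$. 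Hence $\mathcal{U}_F(\bm x(t))$ contains an ellipsoid of positive volume and is nonempty. Note that the constraint $\bm u\in\mathcal{U}$ is already encoded in the rows $\bm A_m,\bm b_m$.

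\textbf{Step 2 (well-posedness of the QP).} For $\bm x\in\mathcal{S}_F$, take any $\bm u\in\mathcal{U}_F(\bm x)$ and choose $\epsilon$ large enough. Then (\ref{equ:safety_feasibility_qp_cons2}) is satisfied, since $\epsilon$ enters it linearly with no upper bound. So the QP is feasible whenever $\bm x\in\mathcal{S}_F$, and $\bm u^*$ is well defined along the trajectory. Steps 1 and 2 are coupled: the feasibility in Step 2 is needed for the differential inequality in Step 1 to hold. To avoid circularity I would argue on the maximal interval. Let $t_1=\sup\{t: \bm x(s)\in\mathcal{S}_F~\forall s\le t\}$ and suppose $t_1<t_{\max}$. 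Step 1 on $[0,t_1]$ gives $V(\bm x(t_1))>0$. By continuity of $V$, which follows from differentiability of $v^*$, the trajectory stays in $\mathcal{S}_F$ slightly beyond $t_1$, a contradiction. Boundedness of $\mathcal{U}$ keeps $\bm u^*$ bounded, which rules out finite escape due to the input.

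\textbf{Step 3 (safety).} On $\mathcal{S}_F$, $\bm u^*(\bm x)$ satisfies $\bm A(\bm x)\bm u^*\le\bm b(\bm x)$. In particular, for each $i\in[N]$,
\begin{align*}
L_fh_i(\bm x)+L_gh_i(\bm x)\bm u^*(\bm x)\ge-\kappa_i h_i(\bm x).
\end{align*}
Since $h_i$ is continuously differentiable by Lemma 2 and $\bm u^*$ is locally Lipschitz, Lemma 1 (equivalently, the comparison argument $h_i(\bm x(t))\ge e^{-\kappa_i t}h_i(\bm x(0))\ge0$) yields forward invariance of each $\mathcal{S}_i$, hence of $\mathcal{S}$. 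Because $\gamma_0>1$, this also gives $\gamma_i^*>1$, i.e.\ $\R(\bm x(t))\cap\obs_i=\emptyset$.

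The main obstacle I expect is the coupling in Step 2. The QP's existence, and hence the validity of all pointwise inequalities, depends on staying in $\mathcal{S}_F$, while staying in $\mathcal{S}_F$ is proved using those inequalities. A secondary subtlety is that $\epsilon^*\le\kappa_v V_0$ is a hypothesis on the closed-loop solution rather than something derived. I would state explicitly that it is assumed along the trajectory, and note that the strict positivity of $V$ is preserved precisely because the exponential lower bound never reaches zero.
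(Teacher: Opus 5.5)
Your proposal is correct and follows the paper's own proof. The slack bound $\epsilon^*\le\kappa_v V_0$ turns the relaxed volume-CBF constraint into $\dot V\ge-\kappa_v V$, which gives invariance of $\mathcal{S}_F$, and the hard CBF rows of $\bm A(\bm x)\bm u\le\bm b(\bm x)$ then give invariance of $\mathcal{S}$ through Lemma 1. Your continuation argument for well-posedness of the QP, and the explicit bound $V(\bm x(t))\ge e^{-\kappa_v t}V(\bm x(0))>0$, which handles the open set $\mathcal{S}_F$ properly, are extra rigor the paper skips. However, bounded inputs alone do not exclude finite escape driven by $f$, so ``for all $t\ge 0$'' still tacitly assumes forward completeness, exactly as the paper does.
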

\begin{figure*}[htbp]
    \centering
    \begin{subfigure}[t]{0.325\linewidth}
        \centering
        \includegraphics[width=\linewidth]{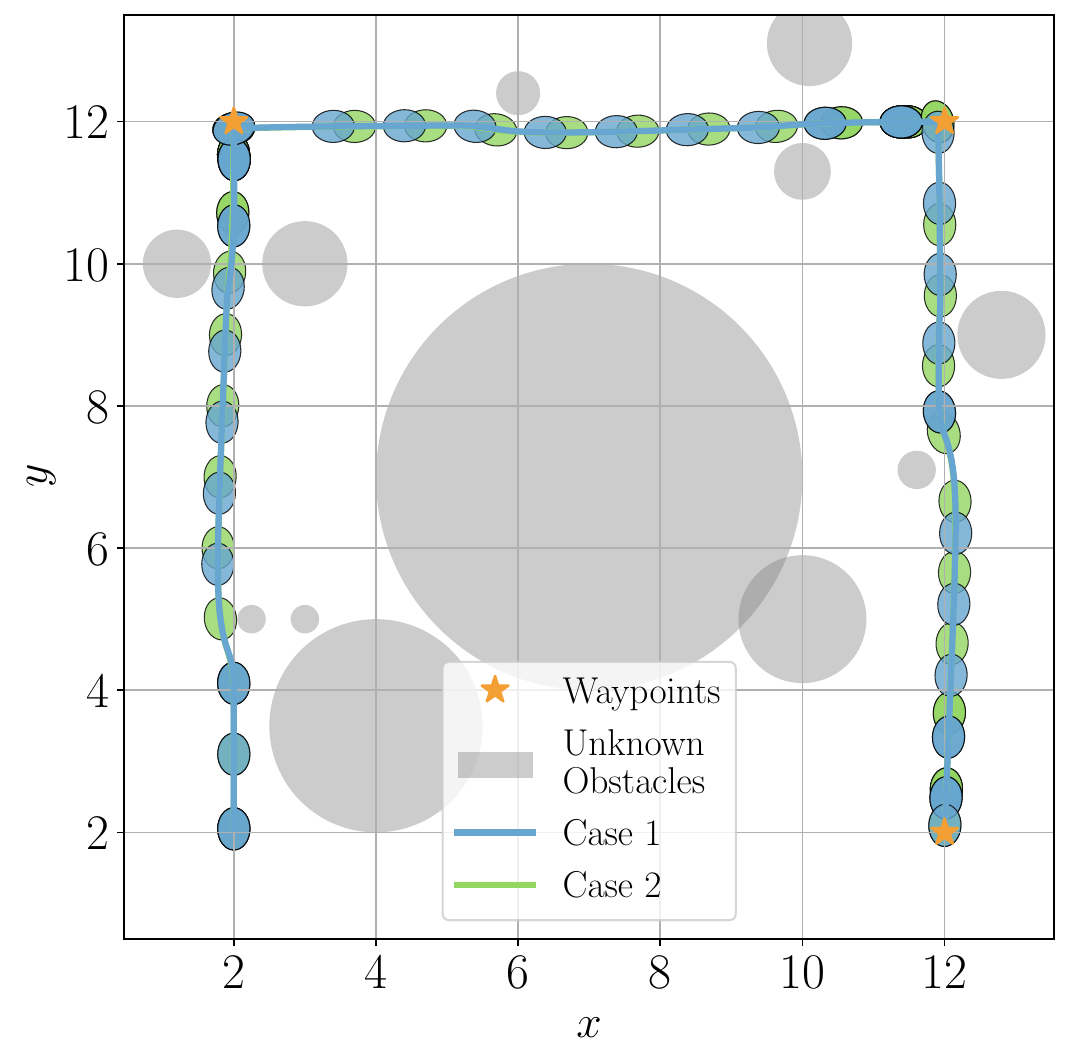}
        \caption{}
        \label{fig:sub_a}
    \end{subfigure}
    \hfill
    \begin{subfigure}[t]{0.325\linewidth} 
        \centering
        \includegraphics[width=\linewidth]{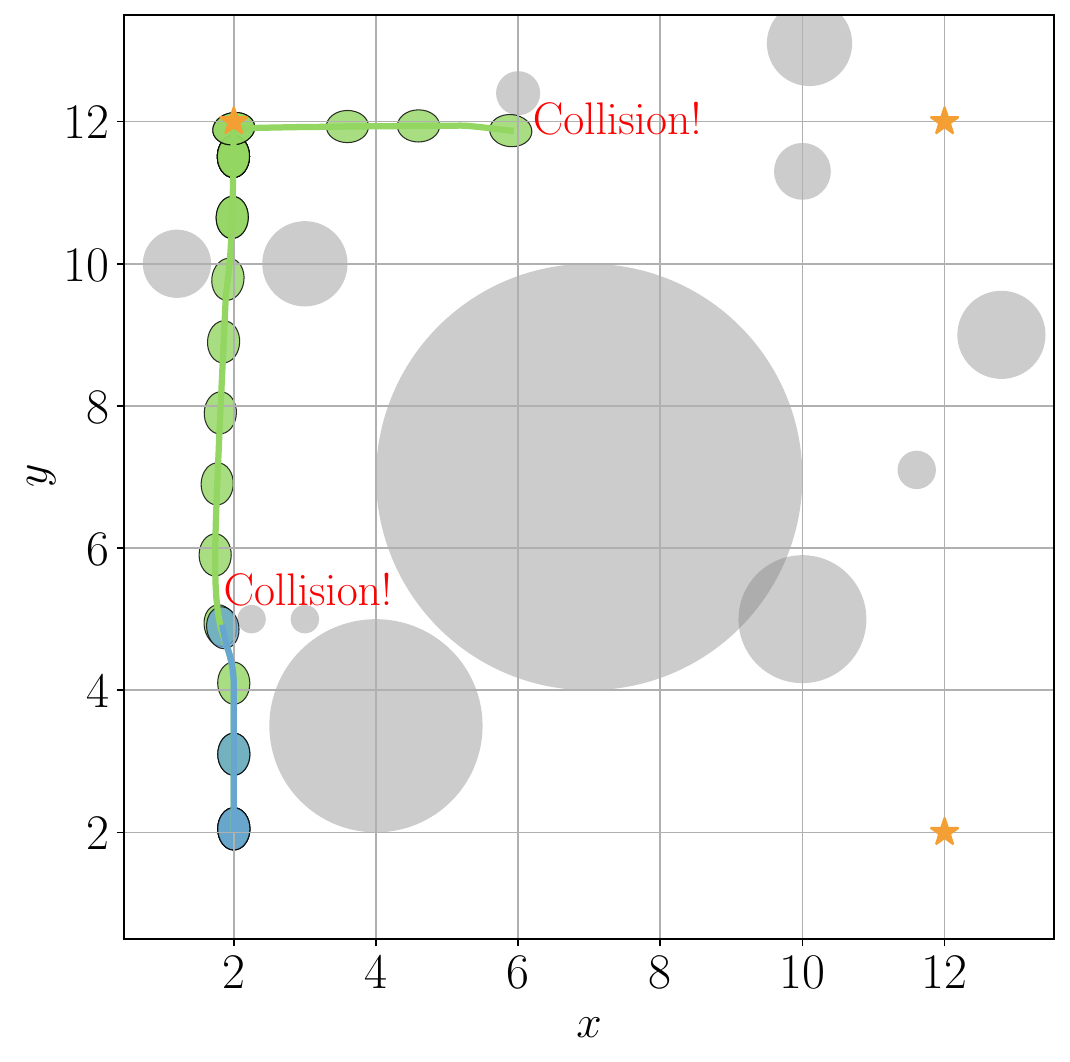} 
        \caption{}
        \label{fig:sub_b}
    \end{subfigure}
    \hfill
    \begin{subfigure}[t]{0.325\linewidth} 
        \centering
        \includegraphics[width=\linewidth]{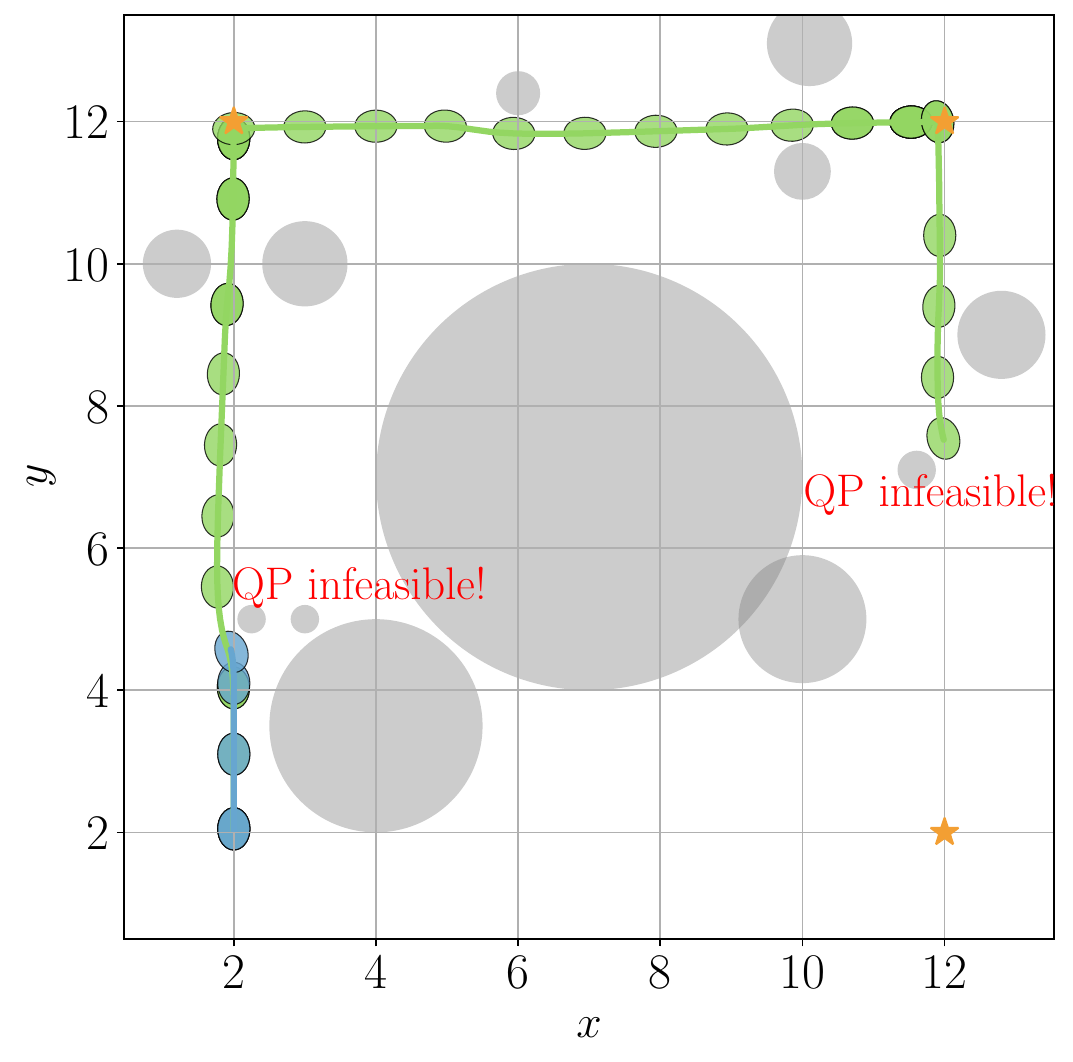} 
        \caption{}
        \label{fig:sub_c}
    \end{subfigure}
    \vspace{-2mm}
    \caption{Paths of the mobile robot under different methods: (a) results for the proposed method; (b) results for Compared method 1; (c) results for Compared method 2.}
    \vspace{-0.2cm}
    \label{fig:sim_trajectory}
\end{figure*}
\begin{pf}
  Due to the assumption that $\bm u^*(\bm x)$ is locally Lipschitz continuous, the existence and uniqueness of the system solution $\bm x(t)$ is guaranteed. 
  When $\epsilon^*(\bm x)\leq \kappa_v V_0$, the constraint (\ref{equ:safety_feasibility_qp_cons2}) can be reformulated as
    \begin{align}
    &L_fh_v(\bm x) + L_gh_v(\bm x)\bm u 
    = L_fV(\bm x) + L_gV(\bm x)\bm u \nonumber\\
    \geq& -\kappa_v (V(\bm x)-V_0)-\epsilon^*(\bm x)\nonumber\\\geq& -\kappa_v V(\bm x).
  \end{align}
  According to Lemma 1, the forward invariance of the feasible set $\cal{S}_F$ is guaranteed, i.e. $V(\bm x(t))> 0$, $\forall t\geq 0$. This condition implies that the CBF and the input constraints are compatible and the feasible input space $\mathcal{U}_F(\bm x)$ is non-empty for all $t\geq 0$. 
  Moreover, the constraint in (\ref{equ:safety_feasibility_qp_cons}) is equivalent to the constraint conditions specified in the formula (\ref{equ:safety_qp}). This equivalence guarantees that the forward invariance of the safe set $\cal{S}$ is preserved, i.e., $\bm x(t)\in\cal{S}$, $\forall t\geq 0$. 
  By combining these results, $u^*(\bm x)$ defined in (\ref{equ:safety_feasibility_qp}) can render the system state $\bm x(t)\in\mathcal{S}\cap\mathcal{S}_F$, $\forall t\geq 0$ if the initial state $\bm x(0)\in\mathcal{S}\cap\mathcal{S}_F$ and $\epsilon^*(\bm x)\leq \kappa_v V_0$.
\end{pf}

\vspace{-0.2cm}
\section{Simulation Results}\label{sec:sim}
\vspace{-0.2cm}

\begin{figure}[t]
    \centering
    \begin{subfigure}[t]{0.95\linewidth} 
        \centering
        \includegraphics[width=\linewidth]{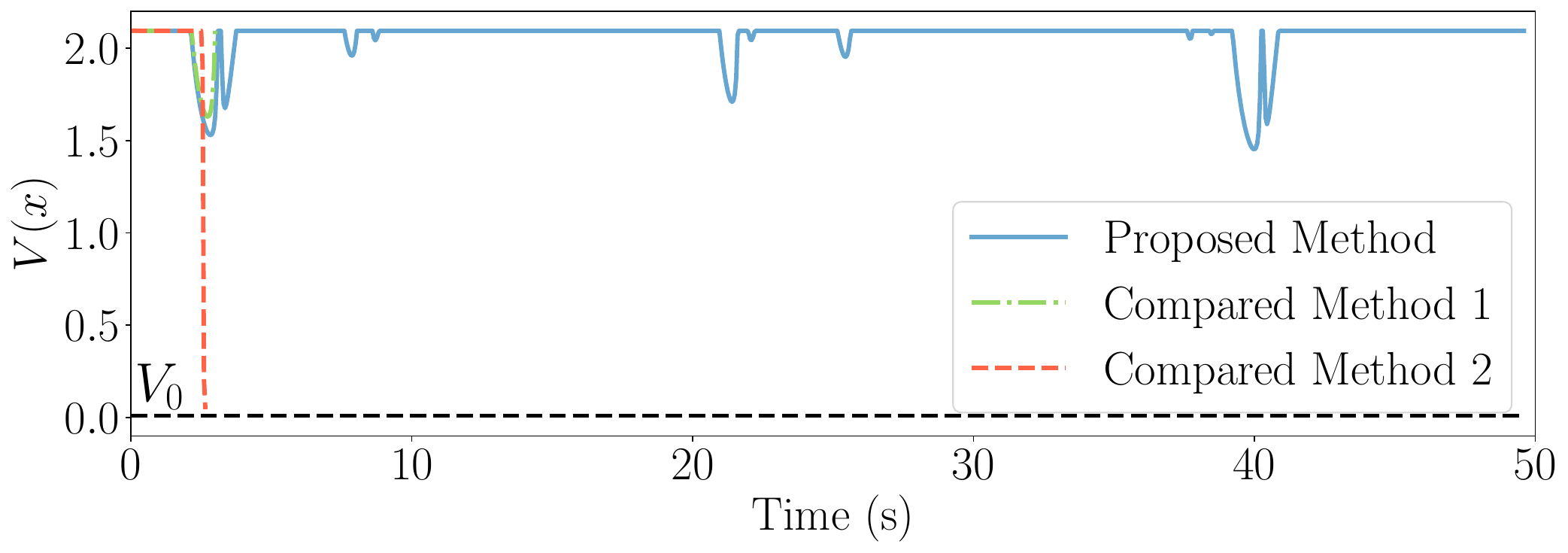} 
        \caption{}
        \label{fig:sub_a}
    \end{subfigure}
    \begin{subfigure}[t]{0.95\linewidth} 
        \centering
        \includegraphics[width=\linewidth]{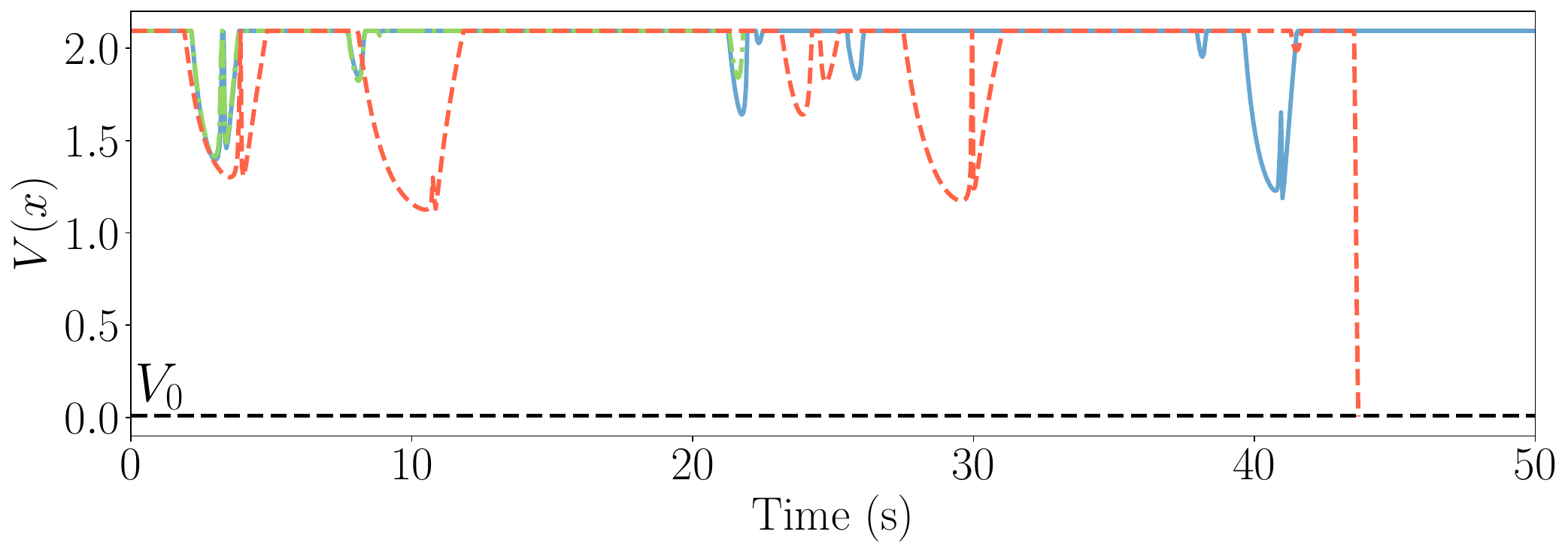} 
        \caption{}
        \label{fig:sub_d}
    \end{subfigure}
    \vspace{-2mm}
    \caption{$V_v(\bm x)$ of different method: (a) results for case 1 ; (b) results for case 2.}
    \label{fig:volume}
        \vspace{-2mm}
\end{figure}

In this section, a case study of a mobile robot navigating through an environment populated with randomly placed obstacles is presented to validate the effectiveness of the proposed method. Referring to \cite{10930526}, the mobile robot is modeled as 
\begin{align*}
\dot{p}_x = v\cos\theta, \dot{p}_y = v\sin\theta, \dot{\theta} = \omega,
\end{align*}
where $\bm x = [p_x,p_y,\theta]^\top\in\reals^3$ represents the state of the robot, control input $\bm u=[v,\omega]^\top$ is bounded within the box constraint $\mathcal{U}=[-1,1]\times[-0.5, 0.5]$. 
In the simulations, the robot is equipped with a depth sensor that provides point cloud measurements of the surrounding obstacles. 
Following \cite{yang2023safe}, unknown noise conforming to an exponential distribution is added to each ray, that is $\bm \nu_{i,j}\sim \exp(\eta)$ with $\eta=2/3$. 
The nominal control input $\bm u_{r}$ is designed to drive the robot towards the predefined goal position $\bm p_d=[p_{dx}, p_{dy}]^\top$
\begin{align*}
&\theta_{r} = \text{atan}(\frac{{p_{dy}}-p_y}{{p_{dx}}-p_x}), ~e_{\theta}=\theta - \theta_{r}, ~\omega =-k_{\omega} e_{\theta}\\
&v_{r} = k_p\sqrt{(p_x-p_{dx})^2 + (p_y-{p_{dy}})^2}\cos(e_{\theta}),
\end{align*}
where $k_{\omega}, k_p$ are positive constants.
The numerical simulations are conducted in Python on a laptop with an Intel Core i9-13900HX CPU and 16GB RAM. 
The optimization problems in (\ref{equ:scaling_factor_obs_rob}), (\ref{equ:inscribed_ellipsoid}), and (\ref{equ:safety_feasibility_qp}) are solved using CVXPY with the SCS solver. 
The parameters are set as $\kappa_i=3.3, \kappa_v=1.1, \gamma_0 = 1.2, V_0=0.01$.  

A calibration dataset of size $N_{\text{cal}}=5000$ is constructed to calibrate the perception system using conformal prediction, ensuring a coverage level of $1-\alpha=0.95$. Thus, the scaling factor of the obstacles is computed as $\Delta=1.346$.
To evaluate the performance of the proposed method, comparisons are made with two baseline approaches. The first baseline employs uninflated obstacle regions (Compared method 1), representing a conventional perception strategy without risk calibration. The second baseline utilizes a standard CBF-based controller without feasibility guarantees (Compared method 2), reflecting the prevailing safety-critical control strategy in the existing literature (\cite{10184036,11124847}).

\begin{table}[t]
\caption{Simulation Results for multiple trials.}
\label{tab:sim_results}
\begin{center}
\begin{tabular}{c c c}
\hline
Method & Success Rate  & Runtime (s) \\
\hline
Proposed Method &  \bf 0.961 & \bf 29.93 \\
Compared Method 1 &  0.907 & 29.35 \\
Compared Method 2 &  0.946 & 26.38 \\
\hline
\end{tabular}
\end{center}
\end{table}

Before each trial, a random seed for noise generation is initialized. Fig. \ref{fig:sim_trajectory} presents two representative cases, showing the paths of the mobile robot under different methods in an environment with multiple obstacles. 
It can be seen that the proposed method successfully guides the robot to reach the goal position while avoiding collisions with obstacles, even in the presence of sensor noise.
Compared method 1 fails to prevent collisions. It can be seen from Fig. \ref{fig:volume} that when the robot is stopped, the volume of the inscribed ellipsoid does not decrease to zero, indicating that the QP remains feasible. But due to the lack of proper calibration, the robot collides with the obstacles.
Compared method 2 encounters infeasibility issues, leading to the robot being unable to reach the goal position.

To avoid the randomness of simulation results, multiple trials are conducted for each method with different random seeds.
The success rate and the average simulation runtime are used as the performance metrics. 
The results are summarized in Table \ref{tab:sim_results}.
Among all evaluated methods, the proposed method achieves the highest success rate of 96.1\%, demonstrating its effectiveness in ensuring safe navigation despite sensor uncertainties.
Besides, the average runtime of the proposed method is comparable to the other methods, indicating the volume-CBF can ensure the compatibility of multiple constraints.

\section{Conclusion and Future Work}\label{sec:conclusion}
This paper presents a novel approach for robotic risk-aware navigation in unknown environments with online perception. 
The proposed framework predicts minimum bounding ellipsoids in real time and employs conformal prediction to address sensor uncertainty. 
By integrating collision and feasibility constraints as control barrier functions via two differentiable optimization layers, the safety-critical controller enables robust obstacle avoidance. 
Numerical simulations validate the effectiveness of the approach for safe navigation in unknown environments. 
Future work will further explore dynamic environments and conduct real-world experiments to comprehensively assess the effectiveness and robustness of the proposed method.


\bibliography{ifacconf}             

\appendix
\end{document}